\title{Process Algebra with Strategic Interleaving, \\ Revised Version}
\author{J.A. Bergstra \and C.A. Middelburg}
\institute{Informatics Institute, Faculty of Science, University of
           Amsterdam, \\
           Science Park~904, 1098~XH Amsterdam, the Netherlands \\
           \email{J.A.Bergstra@uva.nl,C.A.Middelburg@uva.nl}}
\begin{document}
\maketitle

\begin{abstract}
%% 137 %%
In process algebras such as \ACP\ (Algebra of Communicating Processes), 
parallel processes are considered to be interleaved in an arbitrary way.
In the case of multi-threading as found in contemporary programming 
languages, parallel processes are actually interleaved according to some 
interleaving strategy.
An interleaving strategy is what is called a process-scheduling policy 
in the field of operating systems. 
In many systems, for instance hardware/software systems, we have to do 
with both parallel processes that may best be considered to be 
interleaved in an arbitrary way and parallel processes that may best be 
considered to be interleaved according to some interleaving strategy.
Therefore, we extend \ACP\ in this paper with the latter form of 
interleaving.
The established properties of the extension concerned include an
elimination property, a conservative extension property, and a unique 
expansion property.
\begin{keywords} 
process algebra, arbitrary interleaving, strategic interleaving
\end{keywords}%
\begin{classcode}
D.1.3, D.4.1, F.1.2
\end{classcode}
\par\addvspace{1.5ex}
{\sl Note:} 
In the published version of ``Process algebra with strategic 
interleaving'' (Theory of Computing Systems 63(3), 488--505 (2019)) the 
proof outline of Theorem~\ref{theorem-reduction} appears to be 
inadequate.
In particular, a head normal form result is used that is too weak. 
This revision of the published version has been made to rectify 
this.
\end{abstract}

\section{Introduction}
\label{sect-intro}

In algebraic theories of processes, such as \ACP~\cite{BW90,BK84b}, 
CCS~\cite{HM85,Mil89} and CSP~\cite{BHR84,Hoa85}, processes are discrete 
behaviours that proceed by doing steps in a sequential fashion.
The parallel composition of two processes is usually considered to 
incorporate all conceivable interleavings of their steps. 
In each interleaving, the steps of both processes occur in some order 
where each time one step is taken from either of the processes.
According to many, this interpretation of parallel composition, called 
arbitrary interleaving, is a plausible, general, if not idealized 
interpretation. 
Underlying the usual justification of this claim is the assumption that 
at most one step is done at each point in time.
However, others contend that interpretations in which this simplifying 
assumption is fulfilled are not faithful.
Be that as it may, arbitrary interleaving turns out to be appropriate 
for many applications and to facilitate formal algebraic reasoning. 

Multi-threading as found in programming languages such as 
Java~\cite{GJSB00a} and C\#~\cite{HWG03a}, gives rise to parallel 
composition of processes.
In the case of multi-threading, however, the steps of the processes 
concerned are interleaved according to a process-scheduling policy.
We use the term strategic interleaving for this more constrained form of
interleaving; and we further use the term interleaving strategy instead 
of process-scheduling policy.
Arbitrary interleaving and strategic interleaving are quite different.
The following points illustrate this: (a) whether the interleaving of 
certain processes leads to inactiveness depends on the interleaving 
strategy used; (b) sometimes inactiveness occurs with a particular 
interleaving strategy whereas arbitrary interleaving would not lead to 
inactiveness and vice versa.

In previous work, we studied strategic interleaving in the setting of 
thread algebra, which is built on a specialized algebraic theory of 
processes devoted to the behaviours produced by instruction sequences 
under execution (see e.g.~\cite{BM04c,BM06a,BM07a}).
We have, for instance, given demonstrations of points (a) and (b) above 
in this setting. 
Nowadays, multi-threading is often used in the implementation of 
systems.
Because of this, in many systems, for instance hardware/software 
systems, we have to do with parallel processes that may best be 
considered to be interleaved in an arbitrary way as well as parallel 
processes that may best be considered to be interleaved according to 
some interleaving strategy.
This is what motivated us to do the work presented in this paper, 
namely extending \ACP\ such that it supports both arbitrary interleaving
and strategic interleaving.

To our knowledge, there exists no work on strategic interleaving in the 
setting of a general algebraic theory of processes like ACP, CCS and 
CSP.
In the work presented in this paper, we consider strategic interleaving
where process creation is taken into account.
The approach to process creation followed in this paper originates from
the one first followed in~\cite{Ber90a} to extend \ACP\ with process 
creation and later followed \linebreak[2] in~\cite{BB93a,BMU98a,BM02a} 
to extend different timed versions of \ACP\ with process creation.
The only other approach that we know of is the approach, based 
on~\cite{AB88a}, that has for instance been followed 
in~\cite{BV92a,GR97a}.
However, with that approach, it is unlikely that data about the
creation of processes can be made available for the decision making 
con\-cerning the strategic interleaving of processes.

The extension of \ACP\ presented in this paper covers a generic
interleaving strategy that can be instantiated with different specific
interleaving strategies.
We found two plausible ways to deal with inactiveness of a process whose 
steps are being interleaved with steps of other processes in the case of 
strategic interleaving.
This gives rise to two plausible extensions of \ACP.
We will treat only one of them in detail.

The rest of this paper is organized as follows.
In Section~\ref{sect-ACPrec}, we review \ACP\ 
(Section~\ref{subsect-ACP}) and guarded recursion in the setting of 
\ACP\ (Section~\ref{subsect-REC}).
In Section~\ref{sect-SI}, we extend \ACP\ with strategic interleaving
(Section~\ref{subsect-SI}) and establish some important properties of 
the extension (Section~\ref{subsect-theorems-siACP}).  
In Section~\ref{sect-concl}, we make some concluding remarks.

\section{ACP with Guarded Recursion}
\label{sect-ACPrec}

In this section, we give a survey of \ACP\ (Algebra of Communicating 
Processes) and guarded recursion in the setting of \ACP.
For a comprehensive overview, the reader is referred 
to~\cite{BW90,Fok00}.

\subsection{\ACP}
\label{subsect-ACP}

In \ACP, it is assumed that a fixed but arbitrary set $\Act$ of
\emph{actions}, with $\dead \notin \Act$, has been given.
We write $\Actd$ for $\Act \union \set{\dead}$.
It is further assumed that a fixed but arbitrary commutative and 
associative \emph{communication} function 
$\funct{\commf}{\Actd \x \Actd}{\Actd}$, with 
$\commf(\dead,a) = \dead$ for all $a \in \Actd$, has been given.
The function $\commf$ is regarded to give the result of synchronously
performing any two actions for which this is possible, and to give 
$\dead$ otherwise.

The signature of \ACP\ consists of the following constants and 
operators:
\begin{itemize}
\item
for each $a \in \Act$, the \emph{action} constant 
$a$\,;
\item
the \emph{inaction} constant $\dead$\,;
\item
the binary \emph{alternative composition} operator 
$\ph \altc \ph$\,;
\item
the binary \emph{sequential composition} operator 
$\ph \seqc \ph$\,;
\item
the binary \emph{parallel composition} operator 
$\ph \parc \ph$\,;
\item
the binary \emph{left merge} operator 
$\ph \leftm \ph$\,;
\item
the binary \emph{communication merge} operator 
$\ph \commm \ph$\,;
\item
for each $H \subseteq \Act$, the unary \emph{encapsulation} operator
$\encap{H}$\,.
\end{itemize}
We assume that there are infinitely many variables, including $x,y,z$.
Terms are built as usual.
We use infix notation for the binary operators.
The precedence conventions used with respect to the operators of \ACP\
are as follows: $\altc$ binds weaker than all others, $\seqc$ binds
stronger than all others, and the remaining operators bind equally
strong.

The constants and operators of \ACP\ can be explained as follows:
\begin{itemize}
\item
the constant $a$ denotes the process that is only capable of first 
performing action $a$ and next terminating successfully;
\item
the constant $\dead$ denotes the process that is not capable of doing 
anything;
\item
a closed term of the form $t \altc t'$ denotes the process that behaves 
either as the process denoted by $t$ or as the process denoted by $t'$, 
but not both;
\item
a closed term of the form $t \seqc t'$ denotes the process that first 
behaves as the process denoted by $t$ and on successful termination of 
that process it next behaves as the process denoted by $t'$;
\item
a closed term of the form $t \parc t'$ denotes the process that behaves 
as the process that proceeds with the processes denoted by $t$ and $t'$ 
in parallel;
\item
a closed term of the form $t \leftm t'$ denotes the process that behaves 
the same as the process denoted by $t \parc t'$, except that it starts 
with performing an action of the process denoted by $t$;
\item
a closed term of the form $t \commm t'$ denotes the process that behaves 
the same as the process denoted by $t \parc t'$, except that it starts 
with performing an action of the process denoted by $t$ and an action of 
the process denoted by~$t'$ synchronously;
\item
a closed term of the form $\encap{H}(t)$ denotes the process that 
behaves the same as the process denoted by $t$, except that actions from 
$H$ are blocked.
\end{itemize}
The operators $\leftm$ and $\commm$ are of an auxiliary nature.
They are needed to axiomatize \ACP.

The axioms of \ACP\ are the equations given in Table~\ref{axioms-ACP}.
\begin{table}[!t]
\caption{Axioms of \ACP}
\label{axioms-ACP}
\begin{eqntbl}
\begin{axcol}
x \altc y = y \altc x                                  & \axiom{A1}   \\
(x \altc y) \altc z = x \altc (y \altc z)              & \axiom{A2}   \\
x \altc x = x                                          & \axiom{A3}   \\
(x \altc y) \seqc z = x \seqc z \altc y \seqc z        & \axiom{A4}   \\
(x \seqc y) \seqc z = x \seqc (y \seqc z)              & \axiom{A5}   \\
x \altc \dead = x                                      & \axiom{A6}   \\
\dead \seqc x = \dead                                  & \axiom{A7}   \\
{}                                                                    \\
\encap{H}(a) = a                \hfill \mif a \notin H & \axiom{D1}   \\
\encap{H}(a) = \dead            \hfill \mif a \in H    & \axiom{D2}   \\
\encap{H}(x \altc y) = \encap{H}(x) \altc \encap{H}(y) & \axiom{D3}   \\
\encap{H}(x \seqc y) = \encap{H}(x) \seqc \encap{H}(y) & \axiom{D4}  
\end{axcol}
\qquad \qquad
\begin{axcol}
x \parc y =
          x \leftm y \altc y \leftm x \altc x \commm y & \axiom{CM1}  \\
a \leftm x = a \seqc x                                 & \axiom{CM2}  \\
a \seqc x \leftm y = a \seqc (x \parc y)               & \axiom{CM3}  \\
(x \altc y) \leftm z = x \leftm z \altc y \leftm z     & \axiom{CM4}  \\
a \seqc x \commm b = \commf(a,b) \seqc x               & \axiom{CM5}  \\
a \commm b \seqc x = \commf(a,b) \seqc x               & \axiom{CM6}  \\
a \seqc x \commm b \seqc y =
                        \commf(a,b) \seqc (x \parc y)  & \axiom{CM7}  \\
(x \altc y) \commm z = x \commm z \altc y \commm z     & \axiom{CM8}  \\
x \commm (y \altc z) = x \commm y \altc x \commm z     & \axiom{CM9}  \\
\dead \commm x = \dead                                 & \axiom{CM10} \\
x \commm \dead = \dead                                 & \axiom{CM11} \\
a \commm b = \commf(a,b)                               & \axiom{CM12}  
\end{axcol}
\end{eqntbl}
\end{table}
In these equations, $a$, $b$ and $c$ stand for arbitrary constants of 
\ACP, and $H$ stands for an arbitrary subset of $\Act$.
Moreover, $\commf(a,b)$ stands for the action constant for the action 
$\commf(a,b)$.
In D1 and D2, side conditions restrict what $a$ and $H$ stand for.

In other presentations of \ACP, $\commf(a,b)$ is regularly replaced by 
$a \commm b$ in CM5--CM7.
By CM12, which is more often called CF, these replacements give rise to
an equivalent axiomatization.
In other presentations of \ACP, CM10 and CM11 are usually absent.
These equations are not derivable from the other axioms, but all there 
closed substitution instances are derivable from the other axioms.
Moreover, CM10 and CM11 hold in virtually all models of ACP that have 
been devised.

In the sequel, we will use the sum notation $\Altc{i<n} t_i$.
For each $i \in \Nat$, let $t_i$ be a term of \ACP\ or an extension of 
\ACP.
Then $\Altc{i<0} t_i = \dead$ and, for each $n \in \Natpos$,%
\footnote
{We write $\Natpos$ for the set $\set{n \in \Nat \where n \geq 1}$ of 
 positive natural numbers.} 
the term $\Altc{i<n} t_i$ is defined by induction on $n$ as follows:
$\Altc{i<1} t_i = t_0$ and $\Altc{i<n+1} t_i =\Altc{i<n} t_i \altc t_n$.

\subsection{Guarded Recursion}
\label{subsect-REC}

A closed \ACP\ term denotes a process with a finite upper bound to the 
number of actions that it can perform. 
Guarded recursion allows the description of processes without a finite 
upper bound to the number of actions that it can perform.

Let $T$ be \ACP\ or a concrete extensions of \ACP,%
\footnote
{A concrete extension of \ACP\ is an extension of \ACP\ that does not
 offer the possibility of abstraction from certain actions.
 All extensions of \ACP\ introduced in this paper are concrete
 extensions.} 
and let $t$ be a $T$ term containing a variable $X$.
Then an occurrence of $X$ in $t$ is \emph{guarded} if $t$ has a subterm 
of the form $a \seqc t'$ where $a \in \Act$ and $t'$ is a $T$ term 
containing this occurrence of $X$.

Let $T$ be \ACP\ or a concrete extension of \ACP.
Then a $T$ term $t$ is a \emph{guarded} $T$ term if all occurrences of 
variables in $t$ are guarded.

Let $T$ be \ACP\ or a concrete extension of \ACP.
Then a \emph{guarded recursive specification} over $T$ is a finite or
countably infinite set of recursion equations 
$E = \set{X = t_X \where X \in V}$, 
where $V$ is a set of variables and each $t_X$ is either a guarded $T$ 
term in which variables other than the variables from $V$ do not occur 
or a $T$ term rewritable to such a term using the axioms of $T$ in 
either direction and/or the equations in $E$, except the equation 
$X = t_X$, from left to right.
We write $\vars(E)$ for the set of all variables that occur in $E$.
A solution of $E$ in some model of $T$ is a set 
$\set{P_X \where X \in \vars(E)}$ of elements of the carrier of that 
model such that the equations of $E$ hold if, for all $X \in \vars(E)$, 
$X$ is assigned $P_X$.
We are only interested models of \ACP\ and concrete extensions of \ACP\ 
in which guarded recursive specifications have unique solutions. 

Let $T$ be \ACP\ or a concrete extension of \ACP.
We extend $T$ with guarded recursion by adding constants for solutions 
of guarded recursive specifications over $T$ and axioms concerning these 
additional constants.
For each guarded recursive specification $E$ over $T$ and each 
$X \in \vars(E)$, we add a constant standing for the unique solution of 
$E$ for $X$ to the constants of $T$.
The constant standing for the unique solution of $E$ for $X$ is denoted 
by $\rec{X}{E}$.
We use the following notation.
Let $t$ be a $T$ term and $E$ be a guarded recursive specification.
Then we write $\rec{t}{E}$ for $t$ with, for all $X \in \vars(E)$, all
occurrences of $X$ in $t$ replaced by $\rec{X}{E}$.
We add the equation RDP and the conditional equation RSP given in 
Table~\ref{axioms-REC} to the axioms of $T$.
\begin{table}[!t]
\caption{Axioms for guarded recursion}
\label{axioms-REC}
\begin{eqntbl}
\begin{saxcol}
\rec{X}{E} = \rec{t_X}{E} & \mif X \!=\! t_X \in E     & \axiom{RDP} \\
E \Limpl X = \rec{X}{E}   & \mif X \in \vars(E)        & \axiom{RSP} 
\end{saxcol}
\end{eqntbl}
\end{table}
In RDP and RSP, $X$ stands for an arbitrary variable, $t_X$ stands for 
an arbitrary $T$ term, and $E$ stands for an arbitrary guarded recursive 
specification over $T$.
Side conditions restrict what $X$, $t_X$ and $E$ stand for.
We write $T_\mathrm{rec}$ for the resulting theory.

The equations $\rec{X}{E} = \rec{t_X}{E}$ for a fixed $E$ express that 
the constants $\rec{X}{E}$ make up a solution of $E$. 
The conditional equations $E \Limpl X = \rec{X}{E}$ express that this 
solution is the only one.

In extensions of \ACP\ whose axioms include RSP, we have to deal with 
conditional equational formulas with a countably infinite number of 
premises.
Therefore, infinitary conditional equational logic is used in deriving 
equations from the axioms of extensions of \ACP\ whose axioms include 
RSP.
A complete inference system for infinitary conditional equational logic 
can be found in, for example,~\cite{GV93}.
In the case of infinitary conditional equational logic derivation trees 
may be infinitely branching, but they may not have infinite branches.

\section{Strategic Interleaving}
\label{sect-SI}

In this section, we extend \ACP\ with strategic interleaving, i.e.\
interleaving according to some interleaving strategy.
Interleaving strategies are abstractions of scheduling algorithms.
Interleaving according to some interleaving strategy is what really 
happens in the case of multi-threading as found in contemporary 
programming languages.

\subsection{\ACP\ with Strategic Interleaving}
\label{subsect-SI}

In the extension of \ACP\ with strategic interleaving presented below, 
it is expected that an interleaving strategy uses the interleaving 
history in one way or another to make process-scheduling decisions.

The set $\Hist$ of \emph{interleaving histories} is the subset of 
$\seqof{(\Natpos \x \Natpos)}$ that is inductively defined by the 
following rules:
\begin{itemize}
\item
$\emptyseq \in \Hist$;
\item
if $i \leq n$, then $\tup{i,n} \in \Hist$;
\item
if $h \concat \tup{i,n} \in \Hist$, $j \leq n$, and 
$n - 1 \leq m \leq n + 1$, then 
$h \concat \tup{i,n} \concat \tup{j,m} \in \Hist$.%
\footnote
{We write
 $\emptyseq$ for the empty sequence, 
 $d$ for the sequence having $d$ as sole element, and 
 $\alpha \concat \alpha'$ for the concatenation of sequences $\alpha$
 and $\alpha'$.
 We assume that the usual identities, such as
 $\emptyseq \concat \alpha = \alpha$ and
 $(\alpha \concat \alpha') \concat \alpha'' =
  \alpha \concat (\alpha' \concat \alpha'')$, hold.}
\end{itemize}
The intuition concerning interleaving histories is as follows:
if the $k$th pair of an interleaving history is $\tup{i,n}$, then the 
$i$th process got a turn in the $k$th interleaving step and after its
turn there were $n$ processes to be interleaved.
The number of processes to be interleaved may increase due to process
creation (introduced below) and decrease due to successful termination 
of processes.
 
The presented extension of \ACP\ is called \siACP\ (\ACP\ with 
Strategic Interleaving). 
It covers a generic interleaving strategy that can be instantiated with 
different specific interleaving strategies that can be represented in 
the way that is explained below.

In \siACP, it is assumed that the following has been given:
\begin{itemize}
\item
a fixed but arbitrary set $S$; 
\item
for each $n \in \Natpos$, a fixed but arbitrary function 
$\funct{\sched{n}}{\Hist \x S}{\set{1,\ldots,n}}$;
\item
for each $n \in \Natpos$, a fixed but arbitrary function 
$\funct{\updat{n}}{\Hist \x S \x \set{1,\ldots,n} \x \Act}{S}$.
\end{itemize}
The elements of $S$ are called \emph{control states}, $\sched{n}$ is 
called an \emph{abstract scheduler} (\emph{for $n$ processes}), and 
$\updat{n}$ is called a \emph{control state transformer} (\emph{for 
$n$ processes}).
The intuition concerning $S$, $\sched{n}$, and $\updat{n}$ is as 
follows:
\begin{itemize}
\item
the control states from $S$ encode data that are relevant to the 
interleaving strategy, but not derivable from the interleaving history;
\item
if $\sched{n}(h,s) = i$, then the $i$th process gets the next turn after 
interleaving history $h$ in control state $s$;
\item
if $\updat{n}(h,s,i,a) = s'$, then $s'$ is the control state that arises 
from the $i$th process doing $a$ after interleaving history $h$ in 
control state $s$.
\end{itemize}
Thus, $S$, $\indfam{\sched{n}}{n \in \Natpos}$, and 
$\indfam{\updat{n}}{n \in \Natpos}$ make up a way to represent an 
interleaving strategy.
This way to represent an interleaving strategy is engrafted 
on~\cite{SS00a}.

Consider the case where $S$ is a singleton set, 
for each $n \in \Natpos$, $\sched{n}$ is defined by
\begin{ldispl}
\sched{n}(\emptyseq,s) = 1\;,  \\
\sched{n}(h \concat \tup{j,n},s) = (j + 1) \bmod n\;,
\end{ldispl}%
and, for each $n \in \Natpos$, $\updat{n}$ is defined by 
\begin{ldispl}
\updat{n}(h,s,i,a) = s\;.
\end{ldispl}%
In this case, the interleaving strategy corresponds to the round-robin 
scheduling algorithm.
More advanced strategies can be obtained if the scheduling makes more 
advanced use of the interleaving history and the control state.
The interleaving history may, for example, be used to factor the 
individual lifetimes of the processes to be interleaved and their 
creation hierarchy into the process-scheduling decision making.
Individual properties of the processes to be interleaved that depend on 
the actions performed by them can be taken into account by making use of 
the control state.
The control state may, for example, be used to factor the processes 
being interleaved that currently wait to acquire a lock from a process 
that manages a shared resource into the process-scheduling decision 
making.%
\footnote
{In~\cite{BM04c}, various examples of interleaving strategies are given 
 in the setting of the relatively unknown thread algebra. 
 The representation of the more serious of these examples in the current 
 setting demands nontrivial use of the control state.}

In \siACP, it is also assumed that a fixed but arbitrary set $D$ of \emph{data} 
and a fixed but arbitrary function $\funct{\crea}{D}{P}$, where $P$ is 
the set of all closed terms over the signature of \siACP\ (given below), 
have been given and that, for each $d \in D$ and $a, b \in \Act$, 
$\pcr(d),\rcr(d) \in \Act$, $\commf(\pcr(d),a) = \dead$, and
$\commf(a,b) \neq \pcr(d)$.
The action $\pcr(d)$ can be considered a process creation request and 
the action $\rcr(d)$ can be considered a process creation act.
They represent the request to start the process denoted by $\crea(d)$ in 
parallel with the requesting process and the act of carrying out that 
request, respectively.

The signature of \siACP\ consists of the constants and operators
from the signature of \ACP\ and in addition the following operators:
\begin{itemize}
\item
for each $n \in \Natpos$, $h \in \Hist$, and $s \in S$,
the $n$-ary \emph{strategic interleaving} operator $\siop{n}{h}{s}$;
\item
for each $n,i \in \Natpos$ with $i \leq n$, $h \in \Hist$, and 
$s \in S$,
the $n$-ary \emph{positional strategic interleaving} operator
$\posmop{n}{i}{h}{s}$.
\end{itemize}

The strategic interleaving operators can be explained as follows:
\begin{itemize}
\item
a closed term of the form $\si{n}{h}{s}{t_1,\ldots,t_n}$ denotes the 
process that results from interleaving of the $n$ processes denoted by 
$t_1,\ldots,t_n$ after interleaving history $h$ in control state $s$, 
according to the interleaving strategy represented by $S$, 
$\indfam{\sched{n}}{n \in \Natpos}$, and 
$\indfam{\updat{n}}{n \in \Natpos}$.
\end{itemize}
The positional strategic interleaving operators are auxiliary operators 
used to axiomatize the strategic interleaving operators.
The role of the positional strategic interleaving operators in the 
axiomatization is similar to the role of the left merge operator found 
in \ACP.

The axioms of \siACP\ are the axioms of \ACP\ and in addition the 
equations given in Table~\ref{axioms-strategic-interleaving}.
\begin{table}[!t]
\caption{Axioms for strategic interleaving}
\label{axioms-strategic-interleaving}
\begin{eqntbl}
\begin{axcol}
\si{n}{h}{s}{x_1,\ldots,x_n} = 
\posm{n}{\sched{n}(h,s)}{h}{s}{x_1,\ldots,x_n}          & \axiom{SI1}  
\eqnsep
\posm{n}{i}{h}{s}{x_1,\ldots,x_{i-1},\dead,x_{i+1},\ldots,x_n} = \dead
                                                        & \axiom{SI2} \\
\posm{1}{i}{h}{s}{a} = a                                & \axiom{SI3} \\
\posm{n+1}{i}{h}{s}{x_1,\ldots,x_{i-1},a,x_{i+1},\ldots,x_{n+1}} =
\\ \qquad
a \seqc
\si{n}{h \concat \tup{i,n}}{\updat{n+1}(h,s,i,a)}
 {x_1,\ldots,x_{i-1},x_{i+1},\ldots,x_{n+1}}            & \axiom{SI4} \\
\posm{n}{i}{h}{s}{x_1,\ldots,x_{i-1},a \seqc x_i',x_{i+1},\ldots,x_n} =
\\ \qquad
a \seqc
\si{n}{h \concat \tup{i,n}}{\updat{n}(h,s,i,a)}
 {x_1,\ldots,x_{i-1},x_i',x_{i+1},\ldots,x_n}           & \axiom{SI5} \\
\posm{n}{i}{h}{s}{x_1,\ldots,x_{i-1},\pcr(d),x_{i+1},\ldots,x_n} =
\\ \qquad
\rcr(d) \seqc
\si{n}{h \concat \tup{i,n}}{\updat{n}(h,s,i,\pcr(d))}
 {x_1,\ldots,x_{i-1},x_{i+1},\ldots,x_n,\crea(d)}       & \axiom{SI6} \\
\posm{n}{i}{h}{s}
 {x_1,\ldots,x_{i-1},\pcr(d) \seqc x_i',x_{i+1},\ldots,x_n} =
\\ \qquad
\rcr(d) \seqc
\si{n+1}{h \concat \tup{i,n+1}}{\updat{n}(h,s,i,\pcr(d))}
 {x_1,\ldots,x_{i-1},x_i',x_{i+1},\ldots,x_n,\crea(d)}  & \axiom{SI7} \\ 
\posm{n}{i}{h}{s}
 {x_1,\ldots,x_{i-1},x_i' \altc x_i'',x_{i+1},\ldots,x_n} =
\\ \qquad
\posm{n}{i}{h}{s}{x_1,\ldots,x_{i-1},x_i',x_{i+1},\ldots,x_n} \altc
\posm{n}{i}{h}{s}{x_1,\ldots,x_{i-1},x_i'',x_{i+1},\ldots,x_n} 
                                                        & \axiom{SI8}
\end{axcol}
\end{eqntbl}
\end{table}
In the additional equations, $n$ and $i$ stand for arbitrary numbers 
from $\Natpos$ with $i \leq n$, $h$ stands for an arbitrary interleaving 
history from $\Hist$, $s$ stands for an arbitrary control state from 
$S$, $a$ stands for an arbitrary action constant that is not of the form 
$\pcr(d)$ or $\rcr(d)$, and $d$ stands for an arbitrary datum $d$ from 
$D$.

Axiom SI2 expresses that, in the event of inactiveness of the process 
whose turn it is, the whole becomes inactive immediately.
A plausible alternative is that, in the event of inactiveness of the 
process whose turn it is, the whole becomes inactive only after all 
other processes have terminated or become inactive.
In that case, the functions 
$\funct{\updat{n}}{\Hist \x S \x \set{1,\ldots,n} \x \Act}{S}$ must be 
extended to functions 
$\funct{\updat{n}}
 {\Hist \x S \x \set{1,\ldots,n} \x (\Act \union \set{\dead})}{S}$
and axiom SI2 must be replaced by the axioms in 
Table~\ref{axioms-alt-deadlock}.
\begin{table}[!t]
\caption{Alternative axioms for SI2}
\label{axioms-alt-deadlock}
\begin{eqntbl}
\begin{axcol}
\posm{1}{i}{h}{s}{\dead} = \dead                       & \axiom{SI2a} \\
\posm{n+1}{i}{h}{s}{x_1,\ldots,x_{i-1},\dead,x_{i+1},\ldots,x_{n+1}} = 
\\ \qquad
\si{n}{h \concat \tup{i,n}}{\updat{n+1}(h,s,i,\dead)}
      {x_1,\ldots,x_{i-1},x_{i+1},\ldots,x_{n+1}} \seqc \dead         
                                                       & \axiom{SI2b} 
\end{axcol}
\end{eqntbl}
\end{table}

In \siACPr, i.e.\ \siACP\ extended with guarded recursion in the way 
described in Section~\ref{sect-ACPrec}, the processes that can be created 
are restricted to the ones denotable by a closed \siACP\ term.
This restriction stems from the requirement that $\crea$ is a function 
from $D$ to the set of all closed \siACP\ terms.
The restriction can be removed by relaxing this requirement to the 
requirement that $\crea$ is a function from $D$ to the set of all closed 
\siACPr\ terms. 
We write \siACPrp\ for the theory resulting from this relaxation.
In other words, \siACPrp\ differs from \siACPr\ in that it is assumed 
that a fixed but arbitrary function $\funct{\crea}{D}{P}$, where $P$ is 
the set of all closed terms over the signature of \siACPr, has been 
given.

It is customary to associate transition systems with closed terms of the 
language of an ACP-like theory of processes by means of structural 
operational semantics and to use this to construct a model in which 
closed terms are identified if their associated transition systems are 
bisimilar.
The structural operational semantics of \ACP\ can be found 
in~\cite{BW90,Fok00}.
The additional transition rules for the strategic interleaving operators 
and the positional strategic interleaving operators are given in 
Appendix~\ref{appendix-SOS}.

\subsection{Basic Properties of \ACP\ with Strategic Interleaving}
\label{subsect-theorems-siACP}

In this section, the subject of concern is the connection between \ACP\ 
and \siACP.
The main results are an elimination result, a conservative extension 
result, and a unique expansion result.
We begin with establishing some results that will be used in the proof 
of those main results.

Let $T$ be \siACP\ or \siACPr.
Then the set $\HNF$ of \emph{head normal forms of} $T$ is inductively 
defined by the following rules:
\begin{itemize}
\item 
$\dead \in \HNF$;
\item 
if $a \in \Act$, then $a \in \HNF$;
\item 
if $a \in \Act$ and $t$ is a $T$ term, then $a \seqc t \in \HNF$;
\item 
if $t,t' \in \HNF$, then $t \altc t '\in \HNF$.
\end{itemize}
Each head normal form of $T$ is derivably equal to a head normal form of 
the form
$\Altc{i<n} a_i \seqc t_i \altc \Altc{j<m} b_i$,
where $n,m \in \Nat$, for all $i \in \Nat$ with $i < n$, $a_i \in \Act$ 
and $t_i$ is a $T$ term, and, for all $j \in \Nat$ with $j < m$, 
$b_j \in \Act$. 

Each guarded \siACPr\ term is derivably equal to a head normal form of 
\siACPr.
\begin{proposition}[Head normal form]
\label{proposition-HNF-siACP}
For each guarded \siACPr\ term $t$, there exists a head normal form $t'$ 
of \siACPr\ such that $t = t'$ is derivable from the axioms of \siACPr.
\end{proposition}
\begin{proof}
First we prove the following weaker result about head normal forms:
\begin{quote}
\emph{For each guarded \siACP\ term $t$, there exists a head normal form 
$t'$ of \siACP\ such that $t = t'$ is derivable from the axioms of 
\siACP}.
\end{quote}
The proof is straightforward by induction on the structure of $t$.
The case where $t$ is of the form $\dead$ and the case where $t$ is of 
the form $a$ ($a \in \Act$) are trivial.
The case where $t$ is of the form $t_1 \seqc t_2$ follows immediately 
from the induction hypothesis and the claim that, for all head normal 
forms $t_1$ and $t_2$ of \siACP, there exists a head normal form $t'$ of 
\siACP\ such that $t_1 \seqc t_2 = t'$ is derivable from the axioms of 
\siACP.
This claim is easily proved by induction on the structure of~$t_1$.
The case where $t$ is of the form $t_1 \altc t_2$ follows immediately 
from the induction hypothesis.
The cases where $t$ is of one of the forms $t_1 \leftm t_2$, 
$t_1 \commm t_2$, $\encap{H}(t_1)$ or 
$\posm{n}{i}{h}{s}{t_1,\ldots,t_n}$ are proved along the same lines as 
the case where $t$ is of the form $t_1 \seqc t_2$.
In the case that $t$ is of the form $t_1 \commm t_2$, each of the cases 
to be considered in the inductive proof of the claim demands a proof by 
induction on the structure of~$t_2$.
In the case that $t$ is of the form $\posm{n}{i}{h}{s}{t_1,\ldots,t_n}$, 
the claim is of course proved by induction on the structure of $t_i$ 
instead of $t_1$.
The case that $t$ is of the form $t_1 \parc t_2$ follows immediately 
from the case that $t$ is of the form $t_1 \leftm t_2$ and the case that 
$t$ is of the form $t_1 \commm t_2$.
The case that $t$ is of the form $\si{n}{h}{s}{t_1,\ldots,t_n}$ follows 
immediately from the case that $t$ is of the form 
\smash{$\posm{n}{i}{h}{s}{t_1,\ldots,t_n}$}.
Because~$t$ is a guarded \siACP\ term, the case where $t$ is a variable 
cannot occur.

The proof of the proposition itself is also straightforward by induction 
on the structure of $t$.
The cases other than the case where $t$ is of the form $\rec{X}{E}$ is 
proved in the same way as in the above proof of the weaker result.
The case where $t$ is of the form $\rec{X}{E}$ follows immediately from
the weaker result and RDP.
\qed
\end{proof}

Each of the four theorems to come refer to several process algebras.
It is implicit that the same set $\Act$ of actions and the same 
communication function $\commf$ are assumed in the process algebras
referred to.

Each guarded recursive specification over \siACP\ can be reduced to a
guarded recursive specification over \ACP. 
\begin{theorem}[Reduction]
\label{theorem-reduction}
\sloppy
For each guarded recursive specification $E$ over \siACP\ and each 
$X \in \vars(E)$, there exists a guarded recursive specification $E'$ 
over \ACP\ such that $\rec{X}{E} = \rec{X}{E'}$ is derivable from the 
axioms of \siACPr.
\end{theorem}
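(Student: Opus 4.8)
The plan is to construct, from $E$, a guarded recursive specification $E'$ over \ACP\ by repeatedly rewriting right-hand sides to head normal form, and then to obtain $\rec{X}{E} = \rec{X}{E'}$ from RSP. The main tools are Proposition~\ref{proposition-HNF-siACP} (every guarded \siACP\ term is derivably equal to a head normal form) together with the manipulations of Proposition~\ref{proposition-manipulation}. Note that a head normal form is automatically a guarded term, since every variable occurrence in it sits inside a subterm $a \seqc (\cdot)$.

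First I would, using standard manipulations justified by RDP and RSP (in the spirit of Proposition~\ref{proposition-manipulation}), bring $E$ into the form in which every right-hand side $t_X$ is a head normal form of \siACP\ containing no variables outside $\vars(E)$. Then I would take $R$ to be the smallest set of \siACP\ terms that contains $t_X$ for each $X \in \vars(E)$ and is closed under residual extraction: whenever $s \in R$, first unfold any recursion variable occurring in head position (replacing it by its right-hand side) to obtain a guarded term, then apply Proposition~\ref{proposition-HNF-siACP} to write $s = \Altc{i<n} a_i \seqc s_i \altc \Altc{j<m} b_j$, and put every residual $s_i$ into $R$. I would introduce a fresh variable $X_s$ for each $s \in R$ that is not already a variable from $\vars(E)$, keep the original variables for the bare-variable residuals, and let $E'$ consist of one equation for each variable of $E$, namely $X = \Altc{i<n} a_i \seqc \widehat{s_i} \altc \Altc{j<m} b_j$ where the $s_i$ are the residuals of $t_X$ and $\widehat{s_i}$ is $s_i$ itself if $s_i$ is a variable from $\vars(E)$ and is $X_{s_i}$ otherwise, together with an analogous equation for each fresh variable $X_s$ (using the residuals of $s$). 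Every right-hand side of $E'$ is built only from $\altc$, $\seqc$, action constants and $\dead$, so it is a guarded \ACP\ term; hence, provided $R$ is countable, $E'$ is a genuine guarded recursive specification over \ACP, and therefore also over \siACPrp.

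The equality $\rec{X}{E} = \rec{X}{E'}$ then follows from RSP. Assign to every variable of $E'$ the value $\rec{\cdot}{E}$ of the term it names (so the original variable $X$ is assigned $\rec{X}{E}$). For the $E'$-equation belonging to $s \in R$ (or to $t_X$), substituting these values turns the right-hand side into $\rec{(\Altc{i<n} a_i \seqc s_i \altc \Altc{j<m} b_j)}{E}$; since $s = \Altc{i<n} a_i \seqc s_i \altc \Altc{j<m} b_j$ is derivable over \siACP\ --- this is exactly the head-normal-form derivation of Proposition~\ref{proposition-HNF-siACP}, which for the strategic-interleaving and positional-strategic-interleaving operators relies on axioms SI1--SI8 --- this value equals $\rec{s}{E}$, i.e.\ the value assigned to the left-hand side, so the equation holds, derivably over \siACPrp. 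Thus the chosen assignment is a solution of $E'$; since $E'$ is guarded, RSP yields $\rec{s}{E} = \rec{X_s}{E'}$ for every $s \in R$ and $\rec{X}{E} = \rec{X}{E'}$ for every $X \in \vars(E)$ (using RDP, $\rec{t_X}{E} = \rec{X}{E}$). The last equality is the claim. Here one uses that derivations in infinitary conditional equational logic may be of countably infinite length, which matters when $\vars(E')$ is infinite.

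The step I expect to be the main obstacle is showing that $R$ is countable, as this is what makes $E'$ a legitimate (finite or countably infinite) recursive specification. Two points need care. First, only countably many actions, data, interleaving histories and control states should become ``reachable'' from $E$: the set $\Hist$ is countable, the specification $E$ and the closed terms $\crea(d)$ mention only countably many actions and data, and applying $\commf$, $\pcr$, $\rcr$, and the functions $\updat{n}$ to countably many arguments again yields only countably many values --- this last point matters because the set $S$ of control states may itself be uncountable. Second, process creation makes the arity of the strategic interleaving operators, and hence the size of the residuals, grow without bound; but $R$ arises as the union over finitely many iterated residual-extraction steps, i.e.\ as a countable union of countable sets, so it is countable nonetheless. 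The remaining checks --- that the local ``unfold-then-head-normalise'' operation always applies, that residuals keep their free variables within $\vars(E)$, and that $E'$ is well-formed --- are routine given Propositions~\ref{proposition-HNF-siACP} and~\ref{proposition-manipulation}.
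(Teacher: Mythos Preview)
Your proposal is correct and follows essentially the same route as the paper: reduce right-hand sides to head normal form via Proposition~\ref{proposition-HNF-siACP}, introduce fresh variables for the residuals, and iterate, using RDP/RSP (Proposition~\ref{proposition-manipulation}) to justify the resulting equality. The only organisational difference is that the paper phrases the construction as a transfinite iteration of the manipulation steps of Proposition~\ref{proposition-manipulation}, whereas you build the closure $R$ in one go and then invoke RSP once after verifying that the $\rec{s}{E}$ solve $E'$; your explicit countability discussion is more careful than the paper's, which simply remarks that the procedure may have to be repeated countably many times.
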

\begin{proof}
We start with devising an algorithm to construct the guarded recursive 
specification $E'$.
The algorithm keeps a set $V$ of recursion equations from $E'$ that are 
already found and a sequence $W$ of equations of the form 
$X_k = \rec{t_k}{E}$ that still have to be transformed.
The algorithm has a finite or countably infinite number of stages.
In each stage, $V$ and $W$ are finite.
Initially, $V$ is empty and $W$ contains only the equation 
$X_0 = \rec{X}{E}$.

In each stage, we remove the first equation from $W$.
Assume that this equation is $X_k = \rec{t_k}{E}$. 
We bring the term $\rec{t_k}{E}$ into head normal form. 
If $t_k$ is not a guarded term, then we use RDP here to turn $t_k$ into 
a guarded term first.
Thus, by Proposition~\ref{proposition-HNF-siACP}, we can always bring 
the term $\rec{t_k}{E}$ into head normal form.
Assume that the resulting head normal form is
$\Altc{i<n} a_i \seqc t'_i \altc \Altc{j<m} b_j$.
Then, we add the equation 
$X_k = \Altc{i<n} a_i \seqc X_{k+i+1} \altc  \Altc{j<m} b_j$,
where the $X_{k+i+1}$ are fresh variables, to the set $V$.
Moreover, for each $i < n$, we add the equation $X_{k+i+1} = t'_i$ to 
the end of the sequence $W$.
Notice that the terms $t'_i$ are of the form $\rec{t_{k+i+1}}{E}$.

Because $V$ grows monotonically, there exists a limit. 
That limit is the finite or countably infinite linear recursive 
specification $E'$.
Every equation that is added to the finite sequence $W$, is also removed 
from it.
Therefore, the right-hand side of each equation from $E'$ only contains
variables that also occur as the left-hand side of an equation from 
$E'$.

Now, we want to use RSP to show that $\rec{X}{E} = \rec{X}{E'}$ is 
derivable from the axioms of \siACPr.
The variables occurring in $E'$ are $X_0, X_1, X_2, \ldots\;$.
For each $k$, the variable $X_k$ has been exactly once in $W$ as the 
left-hand side of an equation.
For each $k$, assume that  this equation is $X_k = \rec{t_k}{E}$.
To use RSP, we have to show for each $k$ that the equation
$X_k = \Altc{i<n} a_i \seqc X_{k+i+1} \altc  \Altc{j<m} b_j$ from $E'$
with, for each $l$, all occurrences of $X_l$ replaced by $\rec{t_l}{E}$
is derivable from the axioms of \siACPr.
For each $k$, this follows from the construction.
\qed
\end{proof}

The next three theorems will be proved by means of term rewriting 
systems.
In Appendix~\ref{appendix-TRS}, basic definitions and results 
regarding term rewriting systems are collected.
This appendix also serves to fix the terminology on term rewriting 
systems used in the proofs of the next three theorems.

Each closed \siACPrp\ term is derivably equal to a closed \ACPr\ term.
\begin{theorem}[Elimination]
\label{theorem-elimination}
For each closed \siACPrp\ term $t$, there exists a closed \ACPr\ term 
$t'$ such that $t = t'$ is derivable from the axioms of \siACPrp.
\end{theorem}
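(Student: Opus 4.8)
The plan is to eliminate the strategic interleaving operators and the positional strategic interleaving operators from $t$, turning it step by step into a closed \ACPr\ term; observe that the auxiliary \ACP\ operators $\leftm$, $\commm$ and $\encap{H}$ need not be removed, since they already belong to \ACPr. As a preliminary step I would use Theorem~\ref{theorem-reduction} to get rid of every constant $\rec{X}{E}$ with $E$ a guarded recursive specification over \siACP: each such constant is derivably equal, from the axioms of \siACPrp, to a constant $\rec{X}{E'}$ with $E'$ a guarded recursive specification over \ACP, and the same applies to the recursion constants occurring in the processes $\crea(d)$ that can get created along the way. Hence I may assume henceforth that $t$, and every process that can be created from $t$, is built from the operators and constants of \ACP, the strategic interleaving operators, the positional strategic interleaving operators, and constants $\rec{X}{E'}$ for guarded recursive specifications $E'$ over \ACP; I call such terms \emph{good} terms.

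The technical core is a head normal form lemma for good terms: \emph{every closed good term $u$ is derivably equal, from the axioms of \siACPrp, to a term $\Altc{i<n} a_i \seqc u_i \altc \Altc{j<m} b_j$ with $n,m \in \Nat$, $a_i,b_j \in \Act$, and each $u_i$ again a closed good term.} This is proved by induction on the structure of $u$, in the style of Proposition~\ref{proposition-HNF-siACP}, with two new ingredients. When $u$ is a constant $\rec{X}{E'}$, it is (vacuously) a guarded \ACPr\ term, so Proposition~\ref{proposition-HNF-ACPr} applies. When $u$ is of the form $\si{n}{h}{s}{u_1,\ldots,u_n}$ or $\posm{n}{i}{h}{s}{u_1,\ldots,u_n}$, one brings the scheduled argument (resp.\ the $i$th argument) into head normal form by the induction hypothesis, uses SI1 and then SI8 to reduce to the cases in which that argument is a single action or an action-prefixed term, and finally uses SI2--SI7 to peel off the outermost strategic interleaving operator: such a peel yields either $\dead$ or a term $c \seqc \si{n'}{h'}{s'}{v_1,\ldots,v_{n'}}$ whose arguments $v_k$ are again closed good terms (tails of the $u_k$, together with a created process $\crea(d)$ in the cases SI6 and SI7). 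The term-rewriting machinery of Appendix~\ref{appendix-TRS} serves to organize these local peeling steps and the routine \ACP\ bookkeeping; as explained next, it does not by itself terminate.

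Because of process creation, the peeling cannot be iterated until a closed \ACPr\ term is reached: an action-prefixed strategic interleaving term may reduce to another such term indefinitely, for instance when a created process again issues a creation request (nothing forbids $\pcr(d)$ to occur in $\crea(d)$, so this already happens in the recursion-free fragment). So, instead of normalizing, I would collect the residuals. Let $R$ be the least set of closed good terms that contains $t$ and, for every $u \in R$, contains each term $u_i$ from the head normal form chosen for $u$ by the lemma. Since a head normal form has only finitely many summands, $R$ is generated from the single term $t$ by a finitely branching process, hence $R$ is countable. Take a fresh variable $X_u$ for each $u \in R$, and put $E'' = \set{X_u = \Altc{i<n} a_i \seqc X_{u_i} \altc \Altc{j<m} b_j \where u \in R}$, using for each $u$ its chosen head normal form. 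Then $E''$ is a guarded recursive specification over \ACP: it is countable, each right-hand side is a guarded \ACP\ term, and no variables besides the $X_u$ occur in it. The head normal form lemma says exactly that assigning $u$ to $X_u$ for all $u \in R$ satisfies every equation of $E''$; so RSP yields $u = \rec{X_u}{E''}$ for every $u \in R$, in particular $t = \rec{X_t}{E''}$, which is a closed \ACPr\ term.

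I expect the main obstacle to be precisely this last move: recognizing that the strategic interleaving operators cannot be eliminated by a terminating rewriting once process creation is present, and that one has to capture all residual strategic interleavings in a single --- in general countably infinite --- guarded recursive specification over \ACP\ and appeal to RSP. The supporting care consists in verifying that process creation keeps the generation of $R$ finitely branching, hence $R$ countable, even though it enlarges the argument tuples of the strategic interleaving operators; that $E''$ is genuinely guarded and free of variables foreign to $\vars(E'')$; and that created processes, which may themselves contain strategic interleaving operators and recursion and hence have to be run through Theorem~\ref{theorem-reduction} and the head normal form lemma as well, are handled uniformly. With the RSP-based packaging in place, none of this is conceptually hard.
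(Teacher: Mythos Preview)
Your argument is correct and takes a genuinely different route from the paper's. The paper proceeds entirely by term rewriting: it adjoins to the \siACPrp\ axioms (minus A1, A2, RDP, RSP) one Reduction equation $\rec{X}{E} = \rec{X}{E'}$ for each \siACP-recursion constant, orients everything left to right, proves the resulting system terminating modulo AC via an integer-polynomial interpretation (notably $\theta(\pcr(d)) = \theta(\crea(d))^2 + 1$, so that the weight of the process to be created is paid up front at the creation request), proves confluence modulo AC by critical-pair analysis, and then observes that closed normal forms contain only $\altc$ and $\seqc$; RSP is never invoked. You instead unfold head normal forms without expecting termination, collect the countably many residuals into a single guarded recursive specification over \ACP, and close the argument with one application of RSP --- essentially the same mechanism that already drives the proof of the Reduction theorem. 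Your non-termination worry is on point: if $\pcr(d)$ occurs in $\crea(d)$ then the paper's weight $\theta(\pcr(d))$ is defined circularly and the rewrite sequence from $\si{1}{h}{s}{\pcr(d)}$ indeed does not halt, so under the paper's stated assumptions on $\crea$ your RSP-based packaging is the more robust of the two. Conversely, the paper's route delivers confluence and unique normal forms as by-products, and it reuses exactly that machinery in the proofs of the conservative-extension and unique-expansion theorems that follow.
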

\begin{proof}
We prove this by means of a term rewriting system that takes equational 
axioms of \siACPrp\ and equations derivable from the axioms of \siACPrp\ 
as rewrite rules.
Thus, the proof boils down to showing that (a)~the term rewriting system 
concerned has the property that each \siACPrp\ term has a unique normal 
form modulo axioms A1 and A2 and (b)~each closed \siACPrp\ term 
that is a normal form modulo axioms A1 and A2 is a closed \ACPr\ term.
Henceforth, we will write AC for the set of equations that consists of 
axioms A1 and~A2.

Let $R$ be a set of equations that contains for each guarded recursive 
specification $E$ over \siACP\ and $X \in \vars(E)$ an equation 
$\rec{X}{E} = \rec{X}{E'}$, where $E'$ is a guarded recursive 
specification over \ACP, that is derivable from the axioms of \siACPrp.
Such a set $R$ exists by Theorem~\ref{theorem-reduction}.
Consider the term rewriting system $\cR(\siACPrp)$ that consists of the 
axioms of \siACPrp, with the exception of A1, A2, RDP, and RSP, and the 
equations from $R$ taken as rewrite rules.

We show that $\cR(\siACPrp)$ has the property that each \siACPrp\ term 
has a unique normal form modulo AC by proving that $\cR(\siACPrp)$ is 
terminating modulo AC and confluent modulo AC.

First, we show that $\cR(\siACPrp)$ is terminating modulo AC. 
This can be proved by the reduction ordering $>$ induced by the extended 
integer polynomials $\theta(t)$ associated with \siACPrp\ terms $t$ as 
follows:%
\footnote
{Here, extended polynomials differ from polynomials in that both
 variables and expressions of the form $2^X$, where $X$ is a variable,
 are allowed where only variables are allowed in polynomials.}
\begin{ldispl}
\begin{geqns}
\theta(X) = \ul{X}\;, \\
\theta(a) = 2\;, \\
\theta(\dead) = 2\;, \\
\theta(\pcr(d)) = \theta(\crea(d))^2 + 1\;, \\
\theta(t_1 \altc t_2) = \theta(t_1) + \theta(t_2)\;, \\
\theta(t_1 \seqc t_2) = \theta(t_1)^2 \cdot \theta(t_2)\;, 
\end{geqns}
\qquad
\begin{geqns}
\theta(t_1 \parc t_2) = 
  3 \cdot (\theta(t_1) \cdot \theta(t_2))^2 + 1\;, \\
\theta(t_1 \leftm t_2) = (\theta(t_1) \cdot \theta(t_2))^2\;, \\
\theta(t_1 \commm t_2) = (\theta(t_1) \cdot \theta(t_2))^2\;, \\
\theta(\encap{H}(t)) = 2^{\theta(t)}\;, \\
\theta(\si{n}{h}{s}{t_1,\ldots,t_n} = 
  (\theta(t_1) \cdot {} \ldots {} \cdot \theta(t_n))^2 + 1\;, \\
\theta(\posm{n}{i}{h}{s}{t_1,\ldots,t_n}) =
  (\theta(t_1) \cdot {} \ldots {} \cdot \theta(t_n))^2\;,
\end{geqns}
\vspace*{1ex} \\ 
\hfill
\begin{geqns}
\theta(\rec{X}{E}) = 
 \left \{
 \begin{array}{l@{\;\;}l}
 2 & \mathrm{if}\; E 
     \mathrm{\;is\;a\;guarded\;recursive\; specification\;over\;} \ACP\\
 3 & \mathrm{otherwise},
 \end{array}
 \right.
\end{geqns}
\hfill
\end{ldispl}%
where it is assumed that, for each variable $X$ over processes, $\ul{X}$ 
is a variable over integers.
The following is easy to see: 
(a)~$t > t'$ for all rewrite rules $t = t'$ of $\cR(\siACPrp)$ and
(b)~$t > t'$ implies $s > s'$ for all \siACPrp\ terms $s$ and $s'$ for 
which $t = s$ and $t' = s'$ are derivable from AC.%
\footnote
{We do not have that $t > t'$ for all rewrite rules $t = s$ if SI2 is 
 replaced by SI2a and SI2b (see Table~\ref{axioms-alt-deadlock}).}
Hence, $\cR(\siACPrp)$ is terminating modulo AC.

Next, we show that $\cR(\siACPrp)$ is confluent modulo AC. 
It follows from Theorems~5 and~16 in~\cite{JK86a} and the fact that 
$\cR(\siACPrp)$ is terminating modulo AC that $\cR(\siACPrp)$ is 
confluent modulo AC if it does not give rise to critical pairs modulo AC
that are not convergent.
It is easy to see that all critical pairs modulo AC arise from 
overlappings of 
(a)~A3 on A4, CM4, CM8, CM9, D3, and SI8,
(b)~A6 on A4, CM4, CM8, CM9, D3, and SI8,
(c)~A7 on CM3, CM5, CM6, CM7, D4, and SI5,
(d)~CM10 on CM9, and (e)~CM11 on CM8.
It is straightforward to check that all critical pairs concerned are
convergent.
Hence, $\cR(\siACPrp)$ is confluent modulo AC.

Above, we have shown that $\cR(\siACPrp)$ is terminating modulo AC and 
confluent modulo AC and by this that it has the property that each 
\siACPrp\ term has a unique normal form modulo AC.
It remains to be shown that each closed \siACPrp\ term that is a normal 
form modulo AC is a closed \ACPr\ term.
It is not hard to see that, for each closed \siACPrp\ term in which other 
operators than $\altc$ and $\seqc$ occur, a reduction step modulo AC is
still possible in $\cR(\siACPrp)$.
Because a reduction step modulo AC is impossible for a normal form 
modulo AC, no other operators than $\altc$ or $\seqc$ can occur in a 
closed \siACPrp\ term that is a normal form modulo AC. 
Hence, each closed \siACPrp\ term that is a normal form modulo AC is a 
closed \ACPr\ term.
\qed
\end{proof}

Each equation between closed \ACP\ terms that is derivable in \siACP\ is 
also derivable in \ACP.
\begin{theorem}[Conservative extension]
\label{theorem-conservativity}
For each two closed \ACP\ terms $t$ and $t'$, $t = t'$ is derivable from 
the axioms of \siACP\ only if $t = t'$ is derivable from the axioms of 
\ACP.
\end{theorem}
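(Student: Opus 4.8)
The standard route is to derive conservativity from two ingredients: (i) the operational conservativity of \siACP\ over \ACP\ with respect to the structural operational semantics, and (ii) the soundness and completeness of both axiom systems with respect to bisimilarity on the associated transition systems. So the plan is as follows. First I would invoke the well-known result that \ACP\ is sound and complete for bisimilarity on the transition systems generated by its operational rules, and that \siACP\ is likewise sound and complete for bisimilarity on the transition systems generated by the rules of \ACP\ together with the additional rules for the (positional) strategic interleaving operators given in Appendix~\ref{appendix-SOS}. Soundness of \siACP\ is routine (check each axiom in Table~\ref{axioms-strategic-interleaving} against the transition rules); completeness for closed terms follows from the elimination result, Theorem~\ref{theorem-elimination}, restricted to the recursion-free fragment: every closed \siACP\ term is derivably equal to a closed \ACP\ term, and \ACP\ is already complete.

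The crux is the operational conservativity step: for closed \ACP\ terms $t$, the transition rules of \siACP\ induce exactly the same transitions as the transition rules of \ACP. This is where the syntactic shape of the rules matters. The additional rules in Appendix~\ref{appendix-SOS} all have a source whose outermost operator is $\siop{n}{h}{s}$ or $\posmop{n}{i}{h}{s}$, and these operators do not occur in any closed \ACP\ term; moreover the operators of \ACP\ keep their original rules unchanged and no new rules are added with an \ACP\ operator at the head. Hence the transition system rooted at a closed \ACP\ term $t$ in the operational semantics of \siACP\ coincides with the one in the operational semantics of \ACP. I would phrase this either by a direct induction on the derivation of transitions, or — more cleanly — by appealing to a general operational conservativity theorem for transition system specifications in a suitable rule format (the rules of \ACP\ are in path format, and the extension is "source-dependent" in the required sense), as in the approach of Verhoef/Fokkink; this gives the conservativity of the operational semantics essentially for free.

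Putting it together: suppose $t = t'$ is derivable in \siACP\ for closed \ACP\ terms $t,t'$. By soundness of \siACP, the transition systems of $t$ and $t'$ (in the \siACP\ semantics) are bisimilar. By operational conservativity, these are the same transition systems as in the \ACP\ semantics, so $t$ and $t'$ are bisimilar there as well. By completeness of \ACP, $t = t'$ is derivable from the axioms of \ACP. The main obstacle I anticipate is bookkeeping around recursion: Theorem~\ref{theorem-conservativity} as stated concerns \siACP\ and \ACP\ without recursion, so I would keep everything in the recursion-free setting and use Theorem~\ref{theorem-elimination} only in its recursion-free instance (every closed \siACP\ term reduces to a closed \ACP\ term, without needing $\rec{\cdot}{\cdot}$ constants); alternatively, one avoids the operational detour entirely and argues purely proof-theoretically — elimination gives a closed \ACP\ term $s$ with $t = s$ and $t' = s$ derivable in \siACP, and then one must still show that these particular derivations, which pass through \siACP-specific terms, can be replayed inside \ACP, which again reduces to the same conservativity bookkeeping. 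The semantic argument is the more economical one, so that is the route I would take.
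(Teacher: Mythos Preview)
Your argument is correct, but it takes a genuinely different route from the paper. You go through the operational semantics: soundness of \siACP\ for bisimilarity, operational conservativity of the SOS extension (the new rules in Appendix~\ref{appendix-SOS} all have a strategic-interleaving operator at the source, so they never fire on closed \ACP\ terms), and completeness of \ACP. This is the standard Verhoef/Fokkink-style meta-argument and it works here.

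The paper, by contrast, stays entirely on the syntactic side and reuses the term rewriting system from the proof of Theorem~\ref{theorem-elimination}. Restricting that system to \siACP\ (drop the recursion rules), it is terminating and confluent modulo AC, hence Church-Rosser modulo AC. Given closed \ACP\ terms $t,t'$ with $t = t'$ derivable in \siACP, one reduces each to a normal form; Church-Rosser forces these normal forms to be AC-equal. The key observation is that no rewrite rule introduces an operator not already present in its left-hand side, so the reductions starting from $t$ and $t'$ (which contain only $\altc$ and $\seqc$) use only \ACP\ axioms. Hence $t = t'$ is derivable in \ACP. Interestingly, this is precisely the ``purely proof-theoretic'' alternative you sketch at the end and then set aside as requiring the same bookkeeping; the paper shows that the bookkeeping is in fact discharged cleanly by the confluence/Church-Rosser machinery already built for elimination. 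What your route buys is modularity and an appeal to off-the-shelf meta-theorems; what the paper's route buys is self-containment (no need to verify soundness of SI1--SI8 against the SOS, nor to rely on completeness of \ACP) and direct reuse of the rewriting infrastructure.
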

\begin{proof}
We prove this by means of a restriction of the term rewriting system 
from the proof of Theorem~\ref{theorem-elimination}.
Consider the term rewriting system $\cR(\siACP)$ that consists of the 
axioms of \siACP, with the exception of A1 and A2.
$\cR(\siACP)$ is $\cR(\siACPrp)$ restricted to \siACP\ terms.
Just like $\cR(\siACPrp)$, $\cR(\siACP)$ is terminating modulo AC and 
confluent modulo AC.
The proofs of these properties for $\cR(\siACPrp)$ carry over to 
$\cR(\siACP)$.

Let $t$ and $t'$ be two closed \ACP\ terms such that $t = t'$ is 
derivable from the axioms of \siACP.
Reduce $t$ and $t'$ to normal forms $s$ and $s'$, respectively, by means 
of the term rewriting system $\cR(\siACP)$.
By Theorem~5 in~\cite{JK86a}, being confluent modulo AC is equivalent 
to being Church-Rosser modulo AC for a term rewriting system that is 
terminating modulo AC.
This means that $t$ and $t'$ have the same normal form modulo AC.
In other words, $s = s'$ is derivable from axioms A1 and A2.
Because (a)~no other operators than $\altc$ and $\seqc$ occur in $t$ and 
$t'$ and (b)~no rewrite rule introduces one or more of the other 
operators if one or more of the other operators was not already in its 
left-hand side, each rewrite rule applied in the reduction from $t$ to 
$s$ or the reduction from $t'$ to $s'$ is one of the axioms of \ACP.
Therefore, each rewrite rule involved in the reduction from $t$ to $s$ 
or the reduction from $t'$ to $s'$ is an axiom of \ACP.
Hence, the reduction from $t$ to $s$ shows that $t = s$ is derivable 
from the axioms of \ACP\ and the reduction from $t'$ to $s'$ shows that
$t' = s'$ is derivable from the axioms of \ACP.
From this and the fact that $s = s'$ is derivable from axioms A1 and A2,
it follows $t = t'$ is derivable from the axioms of \ACP. 
\qed
\end{proof}

The following theorem concerns the expansion of minimal models of \ACP\ 
to models of \siACP.
\begin{theorem}[Unique expansion]
\label{theorem-expansion}
\sloppy
Each minimal model of \ACP\ has a unique expansion to a model of \siACP.
\end{theorem}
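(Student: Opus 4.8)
The plan is to proceed in two stages: first establish that an expansion exists, then establish uniqueness. Let $\mathfrak{M}$ be a minimal model of \ACP, so that every element of its carrier is the interpretation of a closed \ACP\ term and $\mathfrak{M}$ satisfies exactly the closed equations derivable from the axioms of \ACP. To build an expansion, I must supply interpretations for the additional operators of \siACP\ --- the operators $\siop{n}{h}{s}$ and $\posmop{n}{i}{h}{s}$ --- and verify that the axioms SI1--SI8 hold. The key tool is Theorem~\ref{theorem-elimination} (Elimination): every closed \siACPrp\ term, and in particular every closed \siACP\ term $\si{n}{h}{s}{t_1,\ldots,t_n}$ or $\posm{n}{i}{h}{s}{t_1,\ldots,t_n}$ with the $t_k$ closed \ACP\ terms, is provably equal in \siACPrp\ to a closed \ACPr\ term; combined with the fact (from Proposition~\ref{proposition-HNF-ACPr} and the standard theory) that closed \ACPr\ terms denote the same processes as closed \ACP\ terms in a minimal model, this suggests defining the interpretation of the new operators on the carrier of $\mathfrak{M}$ by: first pick for each $n$-tuple of carrier elements a representing tuple of closed \ACP\ terms, form the corresponding \siACP\ term, rewrite it to a closed \ACP\ term via elimination, and take its interpretation in $\mathfrak{M}$.

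The main obstacle is well-definedness of this interpretation: I must show it does not depend on the choice of representing \ACP\ terms. Suppose $t_k = u_k$ holds in $\mathfrak{M}$ for each $k$; since $\mathfrak{M}$ is a minimal model of \ACP, this means $t_k = u_k$ is derivable in \ACP, hence in \siACP, hence $\si{n}{h}{s}{t_1,\ldots,t_n} = \si{n}{h}{s}{u_1,\ldots,u_n}$ is derivable in \siACP\ (the operators are congruences), and therefore the \ACP\ terms obtained by elimination are provably equal in \siACPrp. Here I would invoke Theorem~\ref{theorem-conservativity} (Conservative extension): an equation between closed \ACP\ terms derivable in \siACP\ is derivable in \ACP, so the two eliminated \ACP\ terms have the same interpretation in $\mathfrak{M}$. (A small subtlety: elimination passes through \siACPrp, not \siACP; but the eliminated terms are closed \ACPr\ terms, and any two closed \ACPr\ terms equal in \siACPrp\ that happen to be \ACP\ terms --- or more carefully, one first uses the standard fact that a closed \ACPr\ term equals a closed \ACP\ term, then applies conservativity --- denote the same element.) Once well-definedness is in hand, verifying SI1--SI8 is routine: each instance, after choosing \ACP-term representatives for the free-variable positions, is a provable \siACPrp\ equation between closed terms, so by the same elimination-plus-conservativity argument both sides receive the same interpretation in $\mathfrak{M}$. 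This yields an expansion $\mathfrak{M}^+ \models \siACP$.

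For uniqueness, suppose $\mathfrak{M}_1^+$ and $\mathfrak{M}_2^+$ are two expansions of $\mathfrak{M}$ to models of \siACP. They agree on the carrier and on all \ACP-operators by assumption, so it suffices to show they agree on $\siop{n}{h}{s}$ and $\posmop{n}{i}{h}{s}$. Given carrier elements $p_1,\ldots,p_n$, write each as $\sem{t_k}$ for closed \ACP\ terms $t_k$. In $\mathfrak{M}_1^+$, the element $\sem{\si{n}{h}{s}{t_1,\ldots,t_n}}$ is computed using the SI-axioms, which hold in $\mathfrak{M}_1^+$; since $\si{n}{h}{s}{t_1,\ldots,t_n}$ is provably equal in \siACP\ to a closed \ACPr\ term $t'$ (by Elimination, noting the $t_k$ are closed \ACP\ terms so no relaxation to \siACPrp\ beyond what concerns closed terms is needed --- or argue within \siACPrp\ and use that both expansions, being \siACP-models with $\mathfrak{M}$'s recursion behaviour, interpret closed \ACPr\ terms identically), we get $\sem{\si{n}{h}{s}{\vec t}}^{\mathfrak{M}_1^+} = \sem{t'}^{\mathfrak{M}_1^+} = \sem{t'}^{\mathfrak{M}}$, and likewise for $\mathfrak{M}_2^+$; the right-hand sides coincide. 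The same argument applies to the positional operators. Hence $\mathfrak{M}_1^+ = \mathfrak{M}_2^+$, and the expansion is unique. The one point demanding care throughout is the interplay between \siACP, \siACPr, and \siACPrp\ in invoking Elimination: since we only ever apply it to closed terms all of whose \siACP-subterms in argument positions are closed \ACP\ terms, the created processes are \ACP-denotable and the detour through \siACPrp\ collapses, so conservativity (Theorem~\ref{theorem-conservativity}) suffices to pin down interpretations in $\mathfrak{M}$.
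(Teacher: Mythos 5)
Your overall strategy coincides with the paper's: interpret the new operators by reducing closed \siACP\ terms to closed \ACP\ terms and reading off their values in the given model, verify SI1--SI8 via confluence/conservativity, and obtain uniqueness from the fact that every closed \siACP\ term is derivably equal in \siACP\ to a closed \ACP\ term whose value is already fixed by the reduct. You also correctly spot and essentially defuse the \siACPrp/\ACPr\ subtlety; the paper handles it by working with the restricted rewriting system $\cR(\siACP)$, under which closed \siACP\ terms (which contain no recursion constants) normalize to closed \ACP\ terms, so the witnessing derivations stay inside \siACP.

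The genuine gap is in your well-definedness argument. You characterize a minimal model of \ACP\ as one that ``satisfies exactly the closed equations derivable from the axioms of \ACP'' and then infer, from $t_k$ and $u_k$ having the same value in the model, that $t_k = u_k$ is derivable in \ACP. That is the \emph{initial} model, not an arbitrary minimal one: minimality only says that every carrier element is the value of some closed term, and a minimal model may validate closed equations that are not derivable (the one-point model is an extreme example; so is any proper homomorphic image of the initial model). For such models the inference fails, and with it your proof that the representative-quantifying definition of $\siop{n}{h}{s}$ is well defined and that the closed instances of SI1--SI8 hold. The paper sidesteps the definitional half of this by fixing once and for all a choice function $\cterm$ from carrier elements to representing closed \ACP\ terms, so the interpretations are well defined by construction, and the entire burden moves to verifying the closed substitution instances of the SI-axioms. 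If you keep your definition, you must show directly that $\elim(\si{n}{h}{s}{t_1,\ldots,t_n})$ and $\elim(\si{n}{h}{s}{u_1,\ldots,u_n})$ receive the same value in the model whenever each pair $t_k$, $u_k$ does --- i.e.\ that the congruence the model induces on closed \ACP\ terms is respected by the eliminated forms --- and derivability in \ACP\ cannot be the route to this for non-initial minimal models. (The same point is in fact the delicate step behind the paper's remark that its first claim ``follows easily''; but your version rests it explicitly on a false characterization of minimal models, which must be repaired.)
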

\begin{proof}
We write $f^\cA$, where $\cA$ is a model of \ACP\ or \siACP\ and $f$ is 
a constant or operator from the signature of $\cA$, for the 
interpretation of $f$ in $\cA$.
We write $t^\cA$, where $\cA$ is a model of \ACP\ or \siACP\ and $t$ is 
a closed term over the signature of $\cA$, for the interpretation of 
$t$ in $\cA$.

Let $\cA$ be a minimal model of \ACP.
Let $\cterm$ be a function from the carrier of $\cA$ to the set of all 
closed \ACP\ terms such that, for each element $p$ of the carrier of 
$\cA$, $\cterm(p)^\cA = p$.
Because $\cA$ is a minimal model of \ACP, $\cterm(p)$ is a total 
function.
We write $\ul{p}$, where $p$ is an element of the carrier of $\cA$, for
$\cterm(p)$.
Let $\elim$ be a function from the set of all closed \siACP\ terms to 
the set of all closed \ACP\ terms such that, for each closed \siACP\ 
term $t$, $\elim(t)$ is one of the normal forms that $t$ can be reduced 
to by means of the term rewriting system $\cR(\siACP)$ from the proof 
of Theorem~\ref{theorem-conservativity}.

We start with constructing an expansion of $\cA$ with interpretations of 
the additional operators of \siACP.
Let $\cB$ be the expansion of $\cA$ with interpretations of the 
additional operators of \siACP\ where these interpretations are defined 
as follows:
\begin{ldispl}
\siop{n}{h}{s}^\cB(p_1,\ldots,p_n) =
{\elim(\siop{n}{h}{s}(\ul{p_1},\ldots,\ul{p_n}))}^\cA\;,
\\
\posmop{n}{i}{h}{s}^{\cB}(p_1,\ldots,p_n) =
{\elim(\posmop{n}{i}{h}{s}(\ul{p_1},\ldots,\ul{p_n}))}^\cA\;,
\end{ldispl}%
for all $p_1,\ldots,p_n$ from the carrier of $\cA$.

We proceed with proving that $\cB$ is a model of \siACP.
By Theorem~\ref{theorem-conservativity}, it is sufficient to prove that
$\cB$ satisfies axioms SI1--SI8.
By its construction, $\cB$ is a minimal algebra and consequently it is 
sufficient to prove that $\cB$ satisfies all closed substitution 
instances of SI1--SI8.
We use the following three claims to prove this:
\begin{itemize}
\item
for all closed substitution instances $t = t'$ of SI1--SI8, 
$t^\cB = {\elim(t)}^\cA$;
\item
for all closed substitution instances $t = t'$ of SI1--SI8, 
$t'^\cB = {\elim(t')}^\cA$;
\item
for all closed substitution instances $t = t'$ of SI1--SI8, 
$\elim(t)^\cA = \elim(t')^\cA$.
\end{itemize}
The first claim follows easily from the definitions of the 
interpretations of the additional operators of \siACP\ given above.
The second claim follows easily from these definitions and the proof of
the first claim. 
Because $\cR(\siACP)$ is Church-Rosser modulo AC (see the proof of 
Theorem~\ref{theorem-conservativity}), we have that 
$\elim(t) = \elim(t')$ is derivable from axioms A1 and A2.
From this, the third claim follows immediately.
It is an immediate consequence of the three claims that $\cB$ satisfies 
all closed substitution instances of SI1--SI8.

We still have to prove that $\cB$ is the only expansion of $\cA$ to a 
model of \siACP.
We can prove this by contradiction. 
Assume that $\cC$ is an expansion of $\cA$ to a model of \siACP\ that
differs from $\cB$.
Then at least one of the additional operators of \siACP\ has different
interpretations in $\cB$ and $\cC$. 
By the definitions of the interpretations of the additional operators of
\siACP\ in $\cB$, this means that there exists a closed \siACP\ term 
$t$ such that $t^\cC \neq \elim(t)^\cA$. 
Moreover, because because $t = \elim(t)$ is derivable from the axioms 
of \siACP, $t^\cC = \elim(t)^\cC$.
Hence, $\elim(t)^\cC \neq \elim(t)^\cA$.
Because $\elim(t)$ is a closed \ACP\ term, this contradicts the fact 
that $\cC$ is an expansion of $\cA$.
\qed
\end{proof}

\section{Concluding Remarks}
\label{sect-concl}

We have extended the algebraic theory of processes known as \ACP\ with
the form of interleaving that underlies multi-threading as found in 
contemporary programming languages.
We have also established some basic properties of the resulting theory.
It remains an open question whether strategic interleaving is definable
in an established extension of \ACP.

In~\cite{Mid20a}, we extend a minor variant of \ACP, known as 
\ACP$_\epsilon$, with strategic interleaving and show that the resulting 
theory can deal with a process-scheduling policy that supports mutual 
exclusion of critical subprocesses.

\appendix

\section{Structural Operational Semantics of ACP+SI}
\label{appendix-SOS}

It is customary to associate transition systems with closed terms of the 
language of an ACP-like theory about processes by means of structural 
operational semantics and to use this to construct a model in which 
closed terms are identified if their associated transition systems are 
bisimilar.
The structural operational semantics of \ACP\ can be found 
in~\cite{BW90,Fok00}.
The additional transition rules for the strategic interleaving operators 
and the positional strategic interleaving operators are given in 
Table~\ref{trules-SI}.
\begin{table}[!t]
\caption{Transition rules for strategic interleaving}
\label{trules-SI}
\begin{ruletbl}
\Rule
{\aterm{x}{a}}
{\aterm{\si{1}{h}{s}{x}}{a}}
\\
\Rule
{\aterm{x_i}{a} \qquad i = \sched{n}(h,s)}
{\astep{\si{n+1}{h}{s}{x_1,\ldots,x_{n+1}}}{a}
 {\si{n}{h \concat \tup{i,n}}{\updat{n+1}(h,s,i,a)}
  {x_1,\ldots,x_{i-1},x_{i+1},\ldots,x_{n+1}}}}
\\
\Rule
{\astep{x_i}{a}{x'_i} \qquad i = \sched{n}(h,s)}
{\astep{\si{n}{h}{s}{x_1,\ldots,x_n}}{a}
 {\si{n}{h \concat \tup{i,n}}{\updat{n}(h,s,i,a)}
  {x_1,\ldots,x_{i-1},x_i',x_{i+1},\ldots,x_n}}}
\\
\Rule
{\aterm{x_i}{\pcr(d)} \qquad i = \sched{n}(h,s)}
{\astep{\si{n}{h}{s}{x_1,\ldots,x_n}}{\pcr(d)}
 {\si{n}{h \concat \tup{i,n}}{\updat{n}(h,s,i,\pcr(d))}
   {x_1,\ldots,x_{i-1},x_{i+1},\ldots,x_n,\crea(d)}}}
\\
\Rule
{\astep{x_i}{\pcr(d)}{x'_i} \qquad i = \sched{n}(h,s)}
{\astep{\si{n}{h}{s}{x_1,\ldots,x_n}}{\pcr(d)}
 {\si{n+1}{h \concat \tup{i,n+1}}{\updat{n}(h,s,i,\pcr(d))}
 {x_1,\ldots,x_{i-1},x_i',x_{i+1},\ldots,x_n,\crea(d)}}}
% \end{ruletbl}
% \end{table}
\\
% \begin{table}[!t]
% \caption{Transition rules for the positional strategic interleaving
% operators}
%\label{trules-SI-2}
% \begin{ruletbl}
\Rule
{\aterm{x}{a}}
{\aterm{\posm{1}{i}{h}{s}{x}}{a}}
\\
\Rule
{\aterm{x_i}{a}}
{\astep{\posm{n+1}{i}{h}{s}{x_1,\ldots,x_{n+1}}}{a}
 {\si{n}{h \concat \tup{i,n}}{\updat{n+1}(h,s,i,a)}
  {x_1,\ldots,x_{i-1},x_{i+1},\ldots,x_{n+1}}}}
\\
\Rule
{\astep{x_i}{a}{x'_i}}
{\astep{\posm{n}{i}{h}{s}{x_1,\ldots,x_n}}{a}
 {\si{n}{h \concat \tup{i,n}}{\updat{n}(h,s,i,a)}
  {x_1,\ldots,x_{i-1},x_i',x_{i+1},\ldots,x_n}}}
\\
\Rule
{\aterm{x_i}{\pcr(d)}}
{\astep{\posm{n}{i}{h}{s}{x_1,\ldots,x_n}}{\pcr(d)}
 {\si{n}{h \concat \tup{i,n}}{\updat{n}(h,s,i,\pcr(d))}
   {x_1,\ldots,x_{i-1},x_{i+1},\ldots,x_n,\crea(d)}}}
\\
\Rule
{\astep{x_i}{\pcr(d)}{x'_i}}
{\astep{\posm{n}{i}{h}{s}{x_1,\ldots,x_n}}{\pcr(d)}
 {\si{n+1}{h \concat \tup{i,n+1}}{\updat{n}(h,s,i,\pcr(d))}
 {x_1,\ldots,x_{i-1},x_i',x_{i+1},\ldots,x_n,\crea(d)}}}
\end{ruletbl}
\end{table}
In this table,
\begin{itemize}
\item
$\aterm{t}{a}$ indicates that $t$ is capable of performing action $a$ 
and then terminating successfully;
\item
$\astep{t}{a}{t'}$ indicates that $t$ is capable of performing action 
$a$ and then behaving as $t'$.
\end{itemize}
The transition rules for the strategic interleaving operator are similar 
to the transition rules for the positional strategic interleaving 
operators, but each transition rule for the strategic interleaving operator 
has the side-condition $i = \sched{n}(h,s)$.

\section{Term Rewriting Systems}
\label{appendix-TRS}

In this appendix, basic definitions and results regarding term 
rewriting systems are collected.
This appendix also serves to fix the terminology on term rewriting 
systems used in the proofs that make use of term rewriting systems.

We assume that a set of constants, a set of operators with fixed 
arities, and a set of variables have been given; and we consider an 
arbitrary term rewriting system $\cR$ for terms that can be built from 
the constants, operators, and variables in these sets.

A \emph{rewrite rule} is a pair of terms $t \osred s$, where $t$ is not 
a variable and each variable occurring in $s$ occurs in $t$ as well.
A \emph{term rewriting system} is a set of rewrite rules.

A \emph{reduction step} of $\cR$ is a pair $t \osred s$ such that for 
some substitution instance $t' \osred s'$ of a rewrite rule of $\cR$, 
$t'$ is a subterm of $t$, and $s$ is $t$ with $t'$ replaced by $s'$.
Here, $t'$ is called the \emph{redex} of the reduction step.
A \emph{reduction} of $\cR$ is a pair $t \msred s$ such that either 
$t \equiv s$ or there exists a finite sequence $t_1 \osred t_2$, \ldots, 
$t_n \osred t_{n+1}$ of consecutive reduction steps of $\cR$ such that 
$t \equiv t_1$ and $s \equiv t_{n+1}$.

A term $t$ \emph{is a normal form of $\cR$} if there does not exist a 
term $s$ such that $t \osred s$ is a reduction step of $\cR$.
A term $t$ \emph{has a normal form in $\cR$} if there exists a reduction 
$t \msred s$  of $\cR$ and $s$ is a normal form of $\cR$.
$\cR$ \emph{is terminating} on term $t$ if there does not exist an 
infinite sequence $t \osred t_1$, $t_1 \osred t_2$, $t_2 \osred t_3$, 
\ldots\ of consecutive reduction steps of $\cR$.
$\cR$ \emph{is terminating} if $\cR$ is terminating on all terms.
$\cR$ \emph{is confluent} if for all reductions $t \msred s_1$ and 
$t \msred s_2$ of $\cR$ there exist reductions $s_1 \msred s$ and 
$s_2 \msred s$ of $\cR$.
If $\cR$ is terminating and confluent, then each term has a unique 
normal form in $\cR$.

A \emph{reduction ordering} for $\cR$ is a well-founded ordering on 
terms that is closed under substitutions and contexts.
$\cR$ is terminating if and only if there exists a reduction ordering 
$>$ for $\cR$ such that $t > s$ for each rewrite rule $t \osred s$ of 
$\cR$.

A \emph{unifier} of two terms $s$ and $t$ is a substitution $\sigma$ 
such that $\sigma(s) \equiv \sigma(t)$.
A \emph{critical pair} of $\cR$ is a pair $(t_1,t_2)$ of terms for which 
there exist rewrite rules $s \osred s'$ and $t \osred t'$ of $\cR$ and a
`most general unifier' $\sigma$ of $s$ and a non-variable subterm of $t$ 
such that $t_1 \equiv \sigma(t'')$ and $t_2 \equiv \sigma(t')$, where 
$t''$ is $t$ with $\sigma(s)$ replaced by $\sigma(s')$.%
\footnote
{See e.g.\ Definition~10 in~\cite{JK86a} for the definitions of most 
general unifier and complete set of unifiers modulo $E$.}%
\addtocounter{footnote}{-1}
A critical pair $(t_1,t_2)$ of $\cR$ \emph{is convergent} if there exist 
reductions $t_1 \msred s$ and $t_2 \msred s$ of $\cR$.
If $\cR$ is terminating, then $\cR$ is confluent if and only if all 
critical pairs of $\cR$ are convergent.

Henceforth, we consider an arbitrary set $E$ of equations between terms.

A \emph{reduction step modulo $E$} of $\cR$ is a pair $t \osredm{E} s$ 
such that there exists a reduction step $t' \osred s'$ of $\cR$ such 
that $t = t'$ and $s = s'$ are derivable from $E$.
A \emph{reduction modulo $E$} of $\cR$ is pair $t \msredm{E} s$ such 
that either $t = s$ is derivable from $E$ or there exists a finite 
sequence $t_1 \osredm{E} t_2$, \ldots, $t_n \osredm{E} t_{n+1}$ of 
consecutive reduction steps modulo $E$ of $\cR$ such that $t \equiv t_1$ 
and $s \equiv t_{n+1}$.

A term $t$ \emph{is a normal form modulo $E$ of $\cR$} if there does not 
exist a term $s$ such that $t \osredm{E} s$ is a reduction step modulo 
$E$ of $\cR$.
A term $t$ \emph{has a normal form modulo $E$ in $\cR$} if there exists 
a reduction modulo $E$ $t \msredm{E} s$ of $\cR$ and $s$ is a normal 
form modulo $E$ of $\cR$.
$\cR$ \emph{is terminating modulo $E$} on term $t$ if there does not 
exist an infinite sequence $t \osredm{E} t_1$, $t_1 \osredm{E} t_2$, 
$t_2 \osredm{E} t_3$,~\ldots\ of consecutive reduction steps modulo $E$ 
of $\cR$.
$\cR$ \emph{is terminating modulo $E$} if $\cR$ is terminating modulo 
$E$ on all terms.
$\cR$ \emph{is confluent modulo $E$} if for all reductions modulo 
$E$ $t \msredm{E} s_1$ and $t \msredm{E} s_2$ of $\cR$ there exist 
reductions modulo $E$ $s_1 \msredm{E} s$ and $s_2 \msredm{E} s$ of 
$\cR$.
If $\cR$ is terminating modulo $E$ and confluent modulo $E$, then each 
term has a unique normal form modulo $E$ in $\cR$.

A reduction ordering $>$ for $\cR$ \emph{is $E$-compatible} if $t > s$
implies $t' > s'$ for all terms $t'$ and $s'$ for which 
$t = t'$ and $s = s'$ are derivable from $E$.
$\cR$ is terminating modulo $E$ if and only if there exists an 
$E$-compatible reduction ordering $>$ for $\cR$ such that $t > s$ for 
each rewrite rule $t \osred s$ of $\cR$.

A \emph{unifier modulo $E$} of two terms $s$ and $t$ is a substitution 
$\sigma$ such that $\sigma(s) = \sigma(t)$ is derivable from $E$.
A \emph{critical pair modulo $E$} of $\cR$ is a pair $(t_1,t_2)$ of 
terms for which there exist rewrite rules $s \osred s'$ and 
$t \osred t'$ of $\cR$ and a substitution $\sigma$ from a `complete set 
of unifiers modulo $E$' of $s$ and a non-variable subterm of $t$ such 
that $t_1 \equiv \sigma(t'')$ and $t_2 \equiv \sigma(t')$, where $t''$ 
is $t$ with $\sigma(s)$ replaced by $\sigma(s')$.\footnotemark\
If $\cR$ is terminating modulo $E$, then $\cR$ is confluent modulo $E$ 
if and only if all critical pairs modulo $E$ of $\cR$ are convergent.

An \emph{$E$-equality step} is a pair $t \oseq{E} s$ such that, for some 
substitution instance $t' = s'$ of an equation from $E$, either $t'$ is 
a subterm of $t$ and $s$ is $t$ with $t'$ replaced by $s'$ or $s'$ is a 
subterm of $t$ and $s$ is $t$ with $s'$ replaced by $t'$.
An \emph{$\cR \union E$-equality step} is a pair $t \osconeq{E} s$ such 
that $t \osred s$ is a reduction step of $\cR$ or $s \osred t$ is a 
reduction step of $\cR$ or $t \oseq{E} s$ is an $E$-equality step.
An \emph{$\cR \union E$-equality} is a pair $t \msconeq{E} s$ such that 
either $t \equiv s$ or there exists a finite sequence 
$t_1 \osconeq{E} t_2$, \ldots, $t_n \osconeq{E} t_{n+1}$ of consecutive
$\cR \union E$-equality steps such that $t \equiv t_1$ and 
$s \equiv t_{n+1}$.
$\cR$ is \emph{Church-Rosser modulo $E$} if for all 
$\cR \union E$-equalities $t \msconeq{E} t'$ there exist reductions 
modulo $E$ $t \msredm{E} s$ and $t' \msredm{E} s$ of $\cR$.
If $\cR$ is terminating modulo $E$, then $\cR$ is Church-Rosser modulo 
$E$ if and only if $\cR$ is confluent modulo $E$.

\bibliographystyle{splncs03}
\bibliography{PA}

\end{document}